

\documentclass[11pt]{article} 
\usepackage[numbers]{natbib}
\usepackage{booktabs} 
\usepackage[utf8]{inputenc} 


\usepackage{geometry} 
\geometry{letterpaper} 
\geometry{margin=1in} 

\usepackage{graphicx} 


\usepackage{booktabs} 
\usepackage{array} 
\usepackage{paralist} 
\usepackage{verbatim} 
\usepackage{subfig} 

\usepackage{fancyhdr} 
\pagestyle{fancy} 
\lhead{}\chead{}\rhead{}
\lfoot{}\cfoot{\thepage}\rfoot{}

\usepackage{sectsty}
\allsectionsfont{\sffamily\mdseries\upshape} 

\usepackage[nottoc,notlof,notlot]{tocbibind} 
\usepackage[titles,subfigure]{tocloft} 


\usepackage[linesnumbered,noend,ruled,noline]{algorithm2e} 

\SetAlFnt{\small}
\SetAlCapFnt{\small}
\SetAlCapNameFnt{\small}
\SetAlCapHSkip{0pt}
\IncMargin{-\parindent}

\usepackage{amsmath}
\usepackage{amssymb}
\usepackage{amsthm} 
\usepackage{url}

\usepackage{color}
\usepackage{bm}
\usepackage{tikz}
\usepackage{thmtools, thm-restate}

\newtheorem{thm}{Theorem}
\newtheorem{theorem}[thm]{Theorem}

\newtheorem*{lemma*}{Lemma}

\usepackage{authblk}

\newcommand{\agents}{\mathcal{N}}
\newcommand{\items}{\mathcal{M}}
\newcommand{\envyfree}{\mathcal{E}}
\newcommand{\lotteries}{\mathcal{L}}

\newenvironment{talign*}
 {\csname align*\endcsname}
 {\endalign}

\usepackage{tikz}
\usetikzlibrary{arrows,petri,topaths}

\usepackage{float}

\begin{document}
\title{Beyond Cake Cutting: Allocating Homogeneous Divisible Goods\thanks{The second author was supported in part by an NSF CAREER award CCF-2047907 and NSF grant CCF-2008280. The third author was supported in part by a Google Research Scholar Award.
The fourth author was supported in part by NSF grants CCF-2008280 and CCF-1755955.  Part of this work was done while the last three authors were visiting the Simons Institute for the Theory of Computing.}}

\author[a]{Ioannis Caragiannis}
\author[b]{Vasilis Gkatzelis}
\author[c]{Alexandros Psomas} 
\author[b]{Daniel Schoepflin}

\affil[a]{Department of Computer Science, Aarhus University. Email: \texttt{iannis@cs.au.dk}}
\affil[b]{Department of Computer Science, Drexel University. Email: \texttt{\{gkatz,schoep\}@drexel.edu}}
\affil[c]{Department of Computer Science, Purdue University. Email: \texttt{apsomas@cs.purdue.edu}}

\date{} 
\renewcommand\Affilfont{\itshape\small}
\renewcommand\Authands{, }

\maketitle

\begin{abstract}
The problem of fair division known as ``cake cutting'' has been the focus of multiple papers spanning several decades. The most prominent problem in this line of work has been to bound the query complexity of computing an envy-free outcome in the Robertson-Webb query model. However, the root of this problem's complexity is somewhat artificial: the agents' values are assumed to be additive across different pieces of the ``cake'' but \emph{infinitely complicated} within each piece. This is unrealistic in most of the motivating examples, where the cake represents a \emph{finite} collection of homogeneous goods. 

We address this issue by introducing a fair division model that more accurately captures these applications: the value that an agent gains from a given good depends only on the amount of the good they receive, yet it can be an arbitrary function of this amount, allowing the agents to express preferences that go beyond  standard cake cutting. In this model, we study the query complexity of computing allocations that are not just envy-free, but also approximately Pareto optimal among all envy-free allocations. Using a novel flow-based approach, we show that we can encode the ex-post feasibility of randomized allocations via a polynomial number of constraints, which reduces our problem to solving a linear program.
\end{abstract}

\newpage

\section{Introduction}
The ``cake-cutting'' problem has played a central role in the history of fair division. The commonly used story behind this problem involves a heterogeneous cake with a variety of different toppings (e.g., a region with sprinkles, a region with more icing, or a cherry sitting at the top), and the goal is to fairly divide this cake among a group of children with different preferences. In reality, this is just a metaphor aiming to capture complicated resource allocation settings, where each region of the cake may correspond to a different resource and each agent competing for these resources may have a diverse and complicated valuation function.

The cake is mathematically represented using the interval $[0,1]$ and a piece of the cake corresponds to a (possibly not contiguous) subset of this interval. The agents' preferences are encoded by a valuation function that maps each sub-interval to a number, and the value of an agent for a set of disjoint intervals is equal to the sum over all these intervals of her value for each interval. So, the agents' values for each sub-interval can be highly complicated, but their value across these intervals is additive.

An important goal in cake-cutting is to find some assignment of pieces to the agents that is ``envy-free'', i.e., each agent (weakly) prefers her piece over the piece allocated to any other agent. The main obstacle in achieving this goal is the fact that, due to the high complexity of the agents' preferences, they cannot be succinctly reported to the algorithm. In the classic Robertson-Webb query model~\cite{RW1998}, access to the agents' valuations is provided only through two types of queries: i) \emph{value queries}, that take as input some interval and return the value of the agent for it, and ii) \emph{cut queries}, that take as input a point $\ell\in [0,1]$ and a value $v$, and return the leftmost point $r\in [\ell, 1]$ such that the value of the agent for the interval $[\ell, r]$ is equal to $v$.

A trivial way to achieve envy-freeness is to throw away all of the cake and thus ensure that no agent has anything to be envious about. 
As it turns out, even the simple restriction that none of the cake can be thrown away is enough to really complicate things. For instances involving up to three agents, this was solved in the 1960s \cite{BT1996} but, up until just a few years ago, we did not even know whether a \emph{bounded} (let alone polynomial) number of queries would be sufficient for instances involving four agents. This was recently settled in the affirmative by Aziz and Mackenzie~\cite{AM2016a,AM2016b}, but the proposed protocols are very complicated, and the required number of queries is superexponential in the number of agents.

Despite the extensive amount of attention that this problem has garnered, its apparent intractability is partly due to the artificial complexity of the model. 
For example, in many of the motivating application domains, the cake actually comprises a set of distinct homogeneous resources and each agent cares only about \emph{how much} of each resource she receives rather than \emph{which part} of it. In other words, when allocating a piece of the proverbial cherry on top of the cake to an agent, she may care about the size of this piece, but not worry about whether it came from the front or the back side of the cherry (it is all just cherry anyway!). But, introducing this type of structure into the cake-cutting model reduces it to the very special case of \emph{piecewise constant valuations}: the $[0, 1]$ interval is partitioned into sub-intervals, each of which corresponds to one of the resources, and each agent's value is constant within each sub-interval. This captures a limited range of agent preferences and even computing a welfare-maximizing envy-free outcome becomes trivial~\cite{CLPP2011}, since the agents can just directly report their (constant) value for each resource. The complexity of the cake-cutting model lies in the assumption that there is no such structure in general, even if the cake is partitioned into arbitrarily small pieces.

In this paper, we propose an alternative model that more accurately captures these applications, while also allowing natural types of agent preferences that are beyond the scope of the cake cutting model (e.g., complementarity and substitutability).
Achieving envy-freeness in this model is easy even without discarding any of the resources, just like it is in most motivating examples; one can just divide each resource equally among the agents. We instead aim to compute not just any envy-free allocation, but rather one that is (approximately) Pareto optimal among all envy-free allocations, which is not guaranteed if we just divide each resource equally. 

\subsection{Results and Techniques}
We study the problem of fairly allocating a set of divisible homogeneous resources among a set of agents. Deviating from the cake-cutting literature, we use a model where each agent's value depends only on \emph{how much} of each resource they receive rather than \emph{which part}. We allow each agent's value to be a complicated function of the amount of each resource that they receive, and we use cut and value queries, analogous to the Robertson-Webb query model, to ``discover'' the structure of these functions.

This model provides a better abstraction for many real-world applications. For example, the resource could represent the shares of a company where the value of an agent may increase dramatically upon receiving more than half of all the shares. Alternatively, if the resource being distributed is energy, then there may be a limit on how much energy an agent can utilize each day, so being allocated more than that would not offer her any additional value.
These are utilities that the standard cake cutting model cannot capture.

In Section~\ref{sec:beyond}, we explain why, in contrast to the cake-cutting model, the use of randomization in our model can enrich the space of achievable (expected) utility vectors, motivating the use of lotteries over deterministic outcomes. 
We focus on ``fair'' lotteries that satisfy \emph{ex-ante} envy-freeness (i.e., envy-freeness with respect to expected utilities) and our main goal is to compute one that is (approximately) Pareto efficient among them. Rather than aiming for \emph{ex-post} Pareto efficiency (which compares the outcome of the lottery to all \emph{deterministic} envy-free outcomes) we instead aim for the stronger guarantee of \emph{ex-ante} Pareto efficiency (which compares the lottery to the richer space of all envy-free lotteries).

In Section~\ref{sec:lowerbound}, we prove that finding an ex-ante envy-free lottery that is $\epsilon$-approximate ex-ante Pareto efficient among all ex-ante envy-free lotteries would require $\Omega(1/\epsilon)$ queries even for instances with just two agents and two resources. 

In Section~\ref{sec:upperbound}, we complement this negative result by providing an algorithm that finds such a lottery using only $O(1/\epsilon^2)$ queries for any constant number of agents and resources. We generalize this to the case of $n$ agents and $m$ resources using $O(nm/\epsilon^2)$ queries. The algorithm begins by issuing a sequence of value queries for each agent-resource pair, providing a discretized approximation of the true valuations. It then computes a lottery by solving a linear program using these approximate valuations. 
The main technical challenge is to ensure that the lottery computed by this linear program is \emph{ex-post feasible}, i.e., that the randomized allocation we obtain can be decomposed into a distribution over feasible deterministic outcomes. We achieve this through a flow-based technique which may be of independent interest. 

Finally, in Section~\ref{sec:future}, we conclude with some observations and open problems for future research within the new model.

\subsection{Related Work}
\paragraph{Cake-cutting.} 
A procedure for computing envy-free outcomes with a finite (but potentially unbounded) number of queries in the cake cutting model was found by \citet{BT1995}. \citet{KLP2013} gave a bounded protocol for any number of agents with piecewise linear valuations. Even for just four agents with general valuations, it was unknown if a protocol producing an envy-free allocation using a bounded number of queries exists, until the result of \citet{AM2016a}.  They then extended their own work to achieve a  procedure for $n$ agents using a bounded number of queries, but the number is a  tower of exponents of $n$~\cite{AM2016b}. \citet{AFMPV2018} improved the number of queries for four agents, but a further improvement on the general $n$-agent case has remained elusive, so there is still a massive gap between the upper bound and the lower bound of $\Omega(n^2)$ queries found by \citet{P2009}. 

\paragraph{Computing efficient envy-free allocations.} 
The type of guarantee that we seek in this paper, i.e., to find an envy-free solution that is efficient among other envy-free ones, has also been examined in the context of the standard cake-cutting model.
\citet{RP1998} demonstrated that if the entire valuation function of each agent is reported to a mechanism, then an approximately Pareto efficient allocation among all envy-free allocations can be found by approximating the valuation functions by piecewise constant valuation functions. 
\citet{CLPP2011} extended this result by providing an efficient algorithm to compute an approximately welfare-maximizing allocation among all envy-free allocations. 
Our work can then be viewed as an analogue to \cite{CLPP2011} in our model.  Whereas their result applies in the standard cake-cutting model and thereby can avoid the use of randomization, ours leverages randomness to apply to more general valuation functions than can be captured by cake-cutting.  
\citet{bei2012optimal} studied a similar objective of finding welfare-maximizing connected allocations but, instead, among the set of proportional allocations in the standard cake-cutting model.\footnote{In standard cake-cutting, any envy-free complete allocation is always proportional, unlike in our model as we demonstrate in Section \ref{sec:beyond}.} 

Notably, fairness constraints may decrease efficiency.   
\citet{brams2012maxsum} examined the standard cake-cutting model and identified sufficient conditions for a welfare-maximizing envy-free allocation to be Pareto efficient among the set of \emph{all} (not necessarily envy-free) allocations.  They also provided examples where no welfare-maximizing envy-free allocation is Pareto efficient among the set of all allocations although the loss in efficiency they show is small.  
In a similar vein, \citet{CKKK2012} quantified the extent to which various fairness notions may affect efficiency and demonstrated that enforcing envy-freeness as a hard constraint can greatly decrease the maximum obtainable welfare. 
\citet{brams2012maxsum} also pointed out the question of finding an efficient algorithm to compute welfare-maximizing envy-free allocations which are Pareto efficient with respect to \emph{all} allocations in the standard cake cutting model as an important and challenging open problem even in the simple case of piecewise constant valuations.  In Section \ref{sec:future}, we pose a related open question in our model which we believe is of comparable difficulty and interest.

\paragraph{Homogeneous goods.}  While the query model we examine in this paper is directly related to the query model of \citet{RW1998}, our valuation function model is most similar to the work of \citet{FT2014}.  In that paper, the authors examine the division of a single homogeneous resource where the agents' utilities are non-linear.  However, they examine an alternative notion of an agent's ``fair share'' (based on the average utility she would receive if all other agents had valuations identical to hers) and agents report allocations rather than reply to queries regarding their valuation functions.  \citet{BGR2020} similarly examine the problem of allocating a single homogeneous good but instead assume that the available amount of the resource is unknown but drawn from some known prior distribution and the valuation functions of the agents are fully known to the designer.  They then study the computational complexity of finding welfare-maximizing allocations subject to ex-ante envy-freeness (where randomness is with respect to the amount of good) and characterize the loss in efficiency by imposing envy-freeness as a constraint.

\paragraph{Lotteries and ex-post feasibility.}  The key technical challenge we face in finding ex-ante Pareto efficient solutions which can be derandomized into a distribution over feasible deterministic outcomes has been explored in many contexts in fair division.  In the case of one-sided matching, \citet{BM2001} gave the well-known ``probabilistic serial'' algorithm which directly computes expected assignments and utilizes the Birkhoff-von Neumann theorem \cite{bvn1,bvn2} to decompose the expected assignments into a lottery over feasible outcomes.  The Birkhoff-von Neumann theorem was further generalized by \citet{BCKM2013} beyond one-to-one settings.  As we discuss in Section \ref{sec:upperbound}, the methods of \cite{BCKM2013} cannot be adapted to our model, so we view our flow-based approach as complementary to these prior results.

\section{Preliminaries}
We consider a setting with a set $\agents$ of $n$ agents and a set $\items$ of $m$ divisible items.
Throughout the paper, we use the terms items, goods, and resources interchangeably. An \emph{outcome} $x$ defines for each agent $i\in \agents$ and item $k\in \items$ the fraction $x_{ik} \in [0,1]$ of item $k$ that is allocated to agent $i$. An outcome is \emph{feasible} if no item is over-allocated, i.e., $\sum_{i \in \agents}{x_{ik}} \leq 1$ for all $k \in \items$.  A \emph{lottery} $L$ is a probability distribution over outcomes and it is (ex-post) \emph{feasible} if it assigns positive probability only to feasible outcomes.

For each agent $i$, her preferences are defined through a set of non-decreasing and Lipschitz\footnote{A function $f$ is Lipschitz if $|f(x) - f(y)| \leq C |x-y|$ $\forall x,y$ and some constant $C$.}
valuation functions $\{f_{ik}\}_{k \in \items}$. Her utility for receiving a fraction $x_{ik}$ of item $k$ is $f_{ik}(x_{ik})$ and her utility for an outcome $x$ is
$u_i(x) = \sum_{k \in \items}{f_{ik}(x_{ik})}.$  In other words, the utilities are weakly monotone within items and additive across items. 
The (expected) utility of agent $i$ for a lottery $L$ which assigns probability $p_\ell$ to each outcome $x^\ell$ is
\begin{align*}
u_i(L) = \sum_{\ell}{p_\ell \sum_{k \in \items}f_{ik}(x^\ell_{ik})}.
\end{align*}

 Let $u_i(x_{j})$ denote the utility agent $i$ would receive in outcome $x$ if she were given the allocation of agent $j$.
We then say that an outcome $x$ is \emph{envy-free} if for all agents $i,j$ we have $u_i(x_i) \geq u_i(x_j)$. Similarly, let $u_i(L_{j})$ denote the utility $i$ would receive from lottery $L$ if she were given the randomized allocation of agent $j$. A lottery $L$ is \emph{ex-post envy-free} if it  assigns positive probability only to outcomes that are envy-free and it is \emph{ex-ante envy-free} if it is envy-free with respect to the agents' expected utilities, i.e., for all agents $i,j$ we have $u_i(L_i) \geq u_i(L_j)$.

An outcome $x$ is \emph{$\epsilon$-Pareto optimal} with respect to a set of outcomes $\mathcal{X}$ if there exists no alternative outcome $y \in \mathcal{X}$ such that $u_i(y) \geq (1+\epsilon) \cdot u_i(x)$ for all $i \in \agents$ and at least one of these inequalities is strict \cite{ruhe1990varepsilon,IWM17,friedman2019fair,ZP20}. 
We say that a lottery $L$ is \emph{ex-post $\epsilon$-Pareto optimal} with respect to a set of outcomes $\mathcal{X}$ if it assigns positive probability only to outcomes that are $\epsilon$-Pareto optimal with respect to $\mathcal{X}$. A lottery $L$ is \emph{ex-ante $\epsilon$-Pareto optimal} with respect to a set of lotteries $\lotteries$ if there is no alternative lottery $L' \in \lotteries$ 
with $u_i(L') \geq (1+\epsilon) u_i(L)$ for all $i \in \agents$ and at least one strict inequality.  For $\epsilon=0$ we retrieve exact Pareto optimality.  Our main result focuses on the class of ex-ante envy-free lotteries $\envyfree$ and, in accordance with Cohler \textit{et al.} \cite{CLPP2011}, we say a lottery $L$ is (approximately) \emph{ex-ante Pareto optimal EF} if it is in $\envyfree$ and also (approximately) ex-ante Pareto optimal with respect to all lotteries in $\envyfree$.

Since the $f_{ik}$ valuation functions can be highly complicated and not succinctly representable, in line with the Robertson-Webb cake-cutting model, we assume that the algorithm can gather parts of this information using two types of queries
-- \emph{value} and \emph{cut} queries. A value query $\textsc{Value}(f, z)$ takes in a valuation function and a $z$ in $[0, 1]$ and returns the value of $f(z)$, and a cut query \textsc{Cut}$(f, v)$ returns the minimum $z$ in $[0, 1]$ such that $f(z)=v$.  Our main algorithm (see Section \ref{sec:upperbound}) shows that we can actually compute approximately ex-ante Pareto optimal EF lotteries using only value queries. In Section~\ref{sec:future} we also demonstrate that cut queries are useful for computing lotteries in $\envyfree$ that are \emph{exactly ex-post} Pareto optimal with respect to any feasible outcome.  
Moreover, our lower bound on the query complexity of finding approximately ex-ante Pareto optimal EF lotteries in Section \ref{sec:lowerbound} applies to any algorithm that can use both cut and value queries.

\section{Moving Beyond Cake Cutting}\label{sec:beyond}

As we discussed in the introduction, when allocating a set of homogeneous goods (where agents care only about the amount of each good they receive rather than what specific part of the good), the preferences in the cake cutting model are reduced to piecewise constant valuations. This is due to the restrictive assumption that the value of an agent for any two pieces is equal to the sum of her value for each of the pieces. Therefore, the agent's value for half a cherry needs to be exactly half of her value for all of the cherry (otherwise combining two halves would not add up to the whole value). In our model, this implies that every agent's valuation function for a good needs to be linear, which is unnecessarily restrictive. 

Our model moves beyond this restriction, allowing us to encode complementarity or substitutability in the agents' preferences. For example, if an agent's valuation function for an item is highly convex, like ``type A'' in Figure \ref{fig:agentValues}, then the agent does not receive much value from that item unless she gets a lot of it. In fact, using an extreme convex function allows us to essentially capture the problem of allocating indivisible items: although the item can still be divided, if agents do not receive any value from it unless they get the whole item then dividing it would amount to discarding it. Thus, ex-post envy-freeness can be, in a sense, too stringent in our model, so similarly to the fair division literature for indivisible items, we use randomization to overcome this issue and achieve ex-ante fairness and efficiency guarantees.

The lack of linearity within an item also means that, unlike in the cake-cutting model, in our model complete allocations which are envy-free are not guaranteed to be proportional.  For instance, if two agents have highly convex valuation functions for a single item (e.g., two agents of ``type A'') then splitting the single item equally between them is envy-free, but not proportional\footnote{Proportionality requires that every agent receives a utility at least $1/n$ times her utility for being allocated everything in full.}: by convexity, her value for a $1/2$ fraction of the item is less than a $1/2$ fraction of her value for receiving the whole item. 

\begin{figure}[h]
\centering
\includegraphics[width=0.7\columnwidth,trim=0.1in 4in 0.1in 4in]{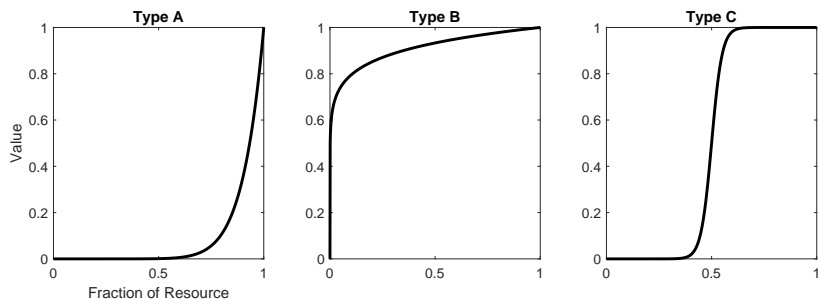}
\caption{Examples of agent valuation functions for an item.}
\label{fig:agentValues}
\end{figure}

\paragraph{Efficiency with non-linear valuation functions.}  
When the valuation functions are linear, as in the cake-cutting model, then agents are indifferent between a lottery and a (deterministic) outcome as long as their \emph{expected} allocation of each good in the former is the same as their deterministic allocation in the latter. As a result, there is no reason to use randomization. Moving beyond linear valuation functions, however, introduces several non-trivial opportunities to leverage randomization in order to make the agents happier. 
On one hand, agents with convex (``type A'') valuation functions are risk-seeking and would strictly prefer a lottery that gives them the full item with probability $1/2$ and nothing with probability $1/2$ over a deterministic outcome that always gives them $1/2$ of the item. On the other hand, agents with concave (``type B'') valuation functions are risk-averse and would strictly prefer the deterministic outcome.

Even for an instance involving two agents and a single item, how should we allocate the item?
Should we use a lottery? If so, what outcomes should the lottery randomize over, and how large does the support need to be in order to achieve ex-ante Pareto efficiency among all envy-free lotteries?
These are non-trivial questions that we urge the reader to contemplate before proceeding.

\paragraph{Outcomes in the utility space. } Although thinking about Pareto optimal lotteries in the allocation space can be quite confusing, visualizing them in the utility space is very convenient.  Each outcome defines a utility for each agent, and this point lies in some $n$-dimensional space \emph{regardless of the number of items}.  For example, consider an instance with two ``type A'' agents and a single resource.  Any feasible outcome (i.e., some way of splitting the resource into two pieces) corresponds to a point in two-dimensional space whose two coordinates are the utility of the first agent and the utility of the second agent, respectively, as demonstrated in Figure \ref{fig:Pareto_frontier}. 

As Figure \ref{fig:Pareto_frontier} illustrates, the set of utility points induced by feasible deterministic outcomes need not be convex and the shape depends on the actual valuation functions of the agents.  This is in contrast to the cake-cutting model, where the utility points from feasible outcomes form a convex set \cite{DS1961}. This non-convexity of the set of utility vectors from deterministic outcomes provides some intuition on why adding randomness enriches the space of feasible (expected) utilities in our model, and does not do so in the cake-cutting model.  This is because the set of all lotteries randomizes between the feasible outcomes can therefore yield any utility vector in the \emph{convex hull} of the utility points achievable by deterministic outcomes.  In short, randomness allows us to ``convexify'' the outcome space (as the dashed line does in Figure \ref{fig:Pareto_frontier}).  Note that all ex-ante Pareto efficient lotteries lie on a face of an (up to) $n$-dimensional convex hull. As a result, we can conclude that all utility points on the ex-ante Pareto frontier can be achieved with a lottery of support size at most $n$.

\begin{figure}[H]
\centering
\includegraphics[width=0.6\columnwidth,trim=0.1in 3.5in 0.1in 3.5in]{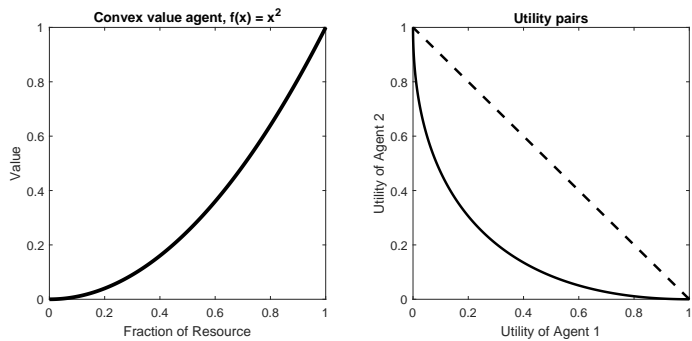}
\caption{At right is a solid line representing the pairs of utilities of all feasible outcomes when a good is completely allocated among two agents each with the valuation pictured at left.  The dashed line shows the pairs of expected utilities by randomizing between the two endpoint  outcomes.}
\label{fig:Pareto_frontier}
\end{figure}

Things become more complicated when combining an agent with a convex valuation function and one with a concave one, as in Figure \ref{fig:EF_frontier}. The graph at the bottom of this figure demonstrates the expected utility pairs that can be achieved for these two agents using ex-ante envy-free lotteries. It is worth noting that the ``all-or-nothing'' lottery (which gives each agent the whole item with probability $0.5$ and yields expected utility $0.5$ for both of these agents) is clearly Pareto dominated by other ex-ante envy-free lotteries. The points in red correspond to the utility pairs induced by the set of ex-ante Pareto optimal EF lotteries.

\begin{figure}[H]
\centering
\includegraphics[width=0.9\columnwidth,keepaspectratio,trim=1.2in 2.5in 1.2in 2.5in]{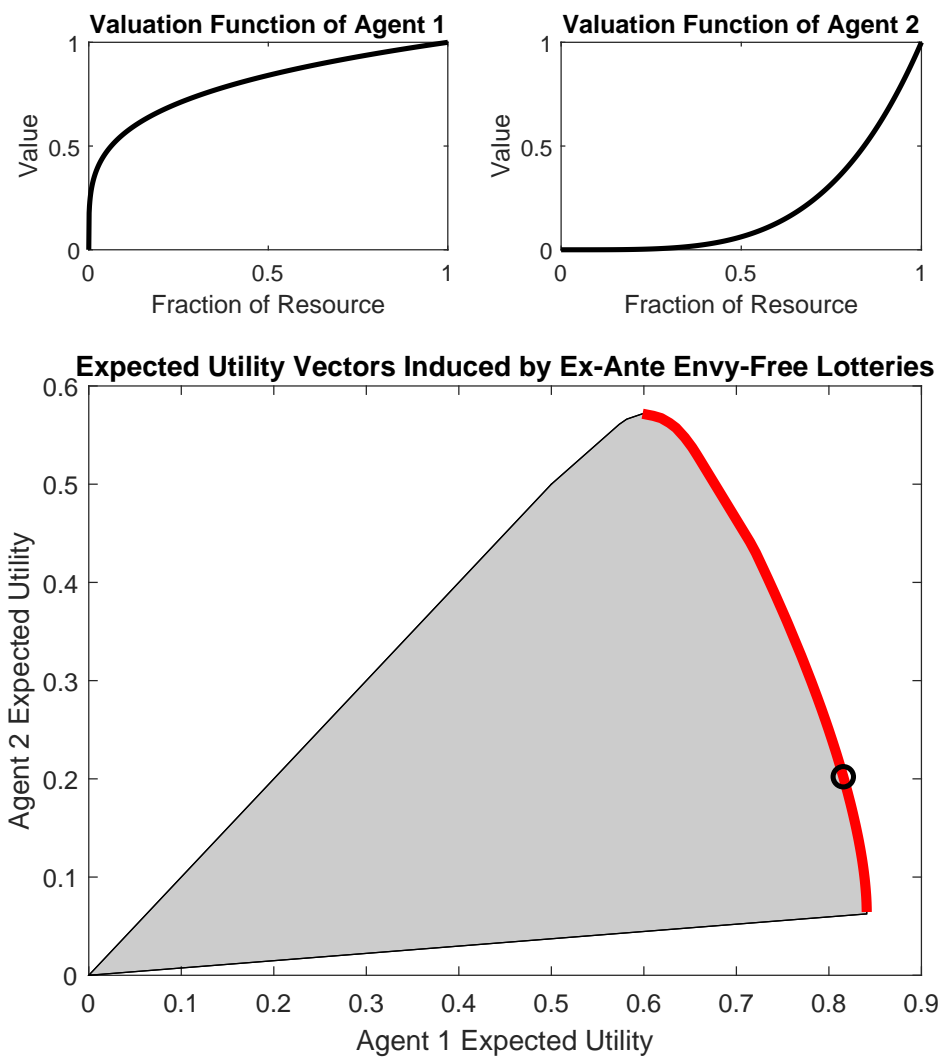}
\caption{At top left is the concave valuation function of Agent 1.  At top right is the  convex valuation function of Agent 2. At the bottom, the shaded region contains all the pairs of utilities (for Agent 1 and Agent 2) induced by ex-ante EF lotteries. The red bolded faces form the ex-ante Pareto optimal EF lotteries. For example, the circled point corresponds to the lottery where Agent 1 is allocated approximately $0.75$ of the good with probability $0.7$ and $0.1$ of the good with the remaining $0.3$ probability (the allocation of Agent 2 is the complement).  
}
\label{fig:EF_frontier}
\end{figure}

The situation becomes even less clear when agents are neither strictly concave nor convex.
For example, how should an item be allocated when some agents have ``type C'' valuations? Furthermore, things get more complicated with multiple resources.
For instance, when all agents have ``type A'' valuations for all resources, i.e., are risk-seeking and prefer lotteries, but have different ``favorite'' resources, allocating each agent her favorite resource with probability $1$ could Pareto dominate a uniform lottery over the entire bundle of goods.  
The space of outcomes and lotteries is so vast that it can be difficult to even guess what an ex-ante Pareto efficient EF lottery looks like.

\section{Query complexity}\label{sec:lowerbound}

In this section, we study the question of query complexity in our model. Specifically, we are interested in the number of queries necessary for finding an approximately ex-ante Pareto optimal EF lottery. Our main result is a lower bound.

\begin{theorem}\label{thm: lower bound}
Computing an ex-ante $\frac{\epsilon}{16}$-Pareto optimal EF lottery with any deterministic adaptive protocol requires at least $\frac{1}{2\epsilon}$ queries, even just for two items and two agents.
\end{theorem}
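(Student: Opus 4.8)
The plan is to construct a family of hard instances with two agents and two items, in which the true valuation functions differ only on a tiny ``hidden'' sub-interval, and argue that any protocol making too few queries cannot locate this sub-interval, hence cannot distinguish between instances that require substantially different lotteries to be near-Pareto-optimal. The natural construction is to make both agents symmetric and both items identical, with a baseline valuation that is linear (``type A''-ish) except that on one short interval of length roughly $\epsilon$ the function has a steep ``jump'' (a region of very high marginal value). There are about $1/(2\epsilon)$ candidate locations for this jump. Each candidate location defines a distinct instance; for the correct answer the lottery must concentrate probability on giving the relevant agent a fraction that lies exactly past the jump, while for every other candidate the same lottery gives a utility that is a $(1+\Omega(\epsilon))$ factor worse than optimal, violating the $\frac{\epsilon}{16}$-Pareto guarantee.

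First I would fix the discretization: partition $[0,1]$ into $N = \lfloor 1/(2\epsilon)\rfloor$ blocks of equal length, and for each block index $t$ define an instance $I_t$ in which a small ``bonus'' of marginal value is placed in block $t$ of \emph{both} agents' valuation for \emph{one} designated item (and the other item kept linear), with everything normalized so valuations remain non-decreasing and $1$-Lipschitz. I would then compute, for $I_t$, what an ex-ante $\frac{\epsilon}{16}$-Pareto optimal EF lottery must look like — it must (ex-ante) allocate the designated item so that at least one agent's share crosses into block $t$ with enough probability, and by the symmetry and the envy-free constraint this pins down the expected utilities up to an $O(\epsilon)$ window. Conversely I would show that any lottery that is $\frac{\epsilon}{16}$-Pareto optimal EF for $I_t$ fails the guarantee for $I_{t'}$ with $t' \neq t$, because the ``bonus'' it was exploiting simply is not there in $I_{t'}$, so the same lottery yields utility bounded away (multiplicatively, by a $1+\Omega(\epsilon)$ factor) from what is achievable in $I_{t'}$.

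Next I would run the adversary/adaptive argument. A deterministic adaptive protocol making $q$ queries, when run against the ``ambiguous'' instance where all blocks look linear, follows a fixed transcript; each value query \textsc{Value}$(f_{ik},z)$ and each cut query \textsc{Cut}$(f_{ik},v)$ reveals information about at most one block's worth of deviation (a value query pins down whether the point $z$ is in the bonus block only if $z$ happens to be near block $t$; a cut query likewise can ``probe'' at most one block, since the bonus is confined to a single short interval). Hence after $q < N = \lfloor 1/(2\epsilon) \rfloor$ queries there remain at least two candidate indices $t, t'$ consistent with the entire transcript and with answers identical to the all-linear instance on every query asked. The protocol must output a single lottery, which cannot be $\frac{\epsilon}{16}$-Pareto optimal EF for both $I_t$ and $I_{t'}$, a contradiction. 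Therefore $q \ge \frac{1}{2\epsilon}$.

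The main obstacle I anticipate is the quantitative heart of the reduction: showing cleanly that placing a bonus in a single block of length $\Theta(\epsilon)$ (i) keeps the functions non-decreasing and Lipschitz, (ii) shifts the ex-ante Pareto optimal EF frontier by a \emph{multiplicative} $1+\Omega(\epsilon)$ amount in at least one coordinate — so that an $\frac{\epsilon}{16}$-approximate solution for one instance is genuinely infeasible for another — and (iii) does so while each query reveals at most one block of information, so the counting gives exactly $\frac{1}{2\epsilon}$ rather than a weaker bound. Getting the constants to line up (bonus size versus the $\epsilon/16$ slack versus the number of blocks $1/(2\epsilon)$) is the delicate part; the envy-free constraint actually helps here, since it forces the lottery to treat the two symmetric agents evenhandedly and thereby removes degrees of freedom the protocol could otherwise use to hedge across candidate instances.
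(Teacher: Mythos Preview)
Your plan differs from the paper's proof in two structural ways, and one of them hides a genuine gap.

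First, the gap. Your ``one query probes one block'' claim is only true if the bonus \emph{returns to the linear baseline} outside its block --- i.e., the perturbation has the shape ``slope $2$ for half the block, then slope $0$ for the other half,'' so that $f(z)=z$ at every point outside the block. Your description (``a steep jump,'' ``region of very high marginal value'') does not enforce this, and if the bonus merely raises the function and then continues with slope $1$, a single value query at any $z$ tells you whether the bump lies to the left or right of $z$; binary search then locates it in $O(\log(1/\epsilon))$ queries and your lower bound collapses. This is exactly why the paper's perturbation goes up and then flattens, so that \emph{outside} the hidden interval the function is indistinguishable from linear.

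Second, the paper does \emph{not} index instances by bump location with symmetric agents. It answers every query as if all valuations were linear, uses pigeonhole to locate a single unqueried interval $[x_1,x_2]$ with $x_1>1/2$, and then builds only \emph{three} instances differing there: $\mathcal{I}_1$ all linear, $\mathcal{I}_2$ with the bump in \emph{agent~2's} valuation for good $b$, and $\mathcal{I}_3$ with the bump in \emph{agent~1's} valuation. Good $a$ is kept linear for both agents and acts as transferable currency. The linear instance $\mathcal{I}_1$ forces the output lottery to give each agent total resources exactly $1$ (else it is not envy-free there). In $\mathcal{I}_2$ the linear agent~1 is then stuck at utility $1$, and an explicit dominating lottery forces agent~2's utility above $1+\epsilon/4$; symmetrically in $\mathcal{I}_3$ agent~1 must exceed $1+\epsilon/4$. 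Since $x_1>1/2$, both agents cannot simultaneously capture the bump, and these two requirements are incompatible. The tension comes from \emph{which agent} has the bump, not from \emph{where} the bump is.

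Your symmetric, location-indexed construction can probably be pushed through, but you would still need to rule out hedging lotteries that split probability across the two surviving candidate blocks; this requires exploiting that a swap-symmetric lottery forces $X$ and $1-X$ to be identically distributed, which caps the total ``excess'' harvestable from two disjoint bumps. The paper's three-instance, asymmetric-agent argument sidesteps this entirely.
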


We briefly sketch the main intuition of the proof. Given an algorithm that terminates after $k$ queries we simulate its behavior when the responses to all queries are as if the underlying valuation functions were linear. For every item $i$, there must be an interval $[x_1,x_2]$, of size at least $1/k$, such that the algorithm only knows $f_i(x_1)$ and $f_i(x_2)$ (which are $x_1$ and $x_2$, respectively, since we were responding as if everything was linear). That is, in the true underlying instance, $f_i$ could be linear in $(x_1,x_2)$, or could be equal to $x_1$ in $(x_1,x_2 - \delta)$ and then increase more rapidly to $x_2$, or any other non-decreasing function. The crux of the argument is that there is no lottery that is simultaneously $\epsilon/16$-Pareto optimal among all EF lotteries for all different possible instances (three different possibilities suffice).

\begin{proof}[Proof of Theorem~\ref{thm: lower bound}]
Consider two agents $\{1,2\}$ and two goods $\{a,b\}$. Fix any deterministic (but possibly adaptive) algorithm asking at most $1/(2\epsilon)$ queries.
We respond to all queries as if all valuation functions were linear. After $1/(2\epsilon)$ queries the algorithm terminates and outputs some lottery $L$.

We construct three instances, $\mathcal{I}_1$, $\mathcal{I}_2$ and $\mathcal{I}_3$, consistent with the query responses. Intuitively, in these instances good $a$ will act as a ``currency'' which the two agents can exchange but good $b$ will have complementarities ``hidden'' between consecutive, sufficiently-spaced queries.  Notice that, by the pigeonhole principle, for item $b$, there must be an interval $[x_1,x_2]$ of size at least $\epsilon$, where $x_1 > 1/2$, such that the algorithm only ``knows'' that $f_{2b}(x_1) = x_1$ and $f_{2b}(x_2) = x_2$. Formally, the algorithm has never asked a value query with an endpoint in $(x_1,x_2)$, or a cut query for a value in $(x_1,x_2)$. Our three instances defer only inside this interval.

In $\mathcal{I}_1$ the valuation functions are linear everywhere.
In the second instance, $\mathcal{I}_2$, agent 1 has, again, linear valuation functions for both goods, and agent 2 has linear valuation  for  good $a$.
But, for good $b$ we have that $f_{2b}(x) = x_1 + 2(x - x_1)$ for all $x \in \left[x_1, x_1 + \epsilon/2\right)$, and $f_{2b}(x) = x_1 + \epsilon$ for all $x \in \left[x_1 + \epsilon/2, x_1 + \epsilon \right)$.  In other words, for the first half of the interval $[x_1, x_1 + \epsilon]$ the slope is $2$ and for the second half it is $0$. $f_{2b}$ is linear otherwise.  
The third instance, $\mathcal{I}_3$, is symmetric to $\mathcal{I}_2$ but with agents 1 and 2 flipped.

For $\mathcal{I}_1$, we note that any envy-free lottery must give both agents an expected total amount of the resources equal to $1$   
However, the algorithm cannot distinguish between $\mathcal{I}_1$, $\mathcal{I}_2$, and $\mathcal{I}_3$, so $L$ \emph{must} allocate an expected total amount of resources equal to $1$ to both agents.

Observe that in instances $\mathcal{I}_3$ and $\mathcal{I}_2$, the maximum utility that either agent can achieve when receiving a sum of fractions of the resources at most $1$ is $1 + \frac{\epsilon}{2}$.
This ex-ante utility is obtained  for agent 1 when she receives exactly $x_1 + \epsilon/2$ of good $b$ and $1  - (x_1 + \epsilon/2)$ of good $a$, and for agent 2 at her corresponding fractions in terms of $x_1$. But, these two outcomes cannot be achieved simultaneously since $x_1 > 0.5$.

Now, since both agents receive an expected total amount of resources equal to $1$, in $\mathcal{I}_2$, $L$ gives agent $1$ utility equal to $1$ (recall that in $\mathcal{I}_2$ agent $1$ is linear everywhere). But, there is an ex-ante envy-free lottery $L'$ for $\mathcal{I}_2$ with $u_1(L') \geq 1 + \frac{\epsilon}{8}$ and $u_2(L') \geq 1 + \frac{3\epsilon}{8}$: give to agent $2$ a $x_1 + \frac{\epsilon}{2}$ fraction of resource $b$ and a $(1 - x_1 - \frac{5}{8}\epsilon)$ fraction of resource $a$ and the rest to the agent $1$.
Since $L$ is ex-ante $\frac{\epsilon}{16}$-Pareto efficient for $\mathcal{I}_2$, it must give agent 2 utility strictly more than $1 + \frac{\epsilon}{4}$.  Similarly, $L$ must give agent 1 utility more than $1 + \frac{\epsilon}{4}$ in $\mathcal{I}_3$.

It remains to show that it is impossible for $L$ to allocate to each agent total resources equal to $1$, and simultaneously yield utility $1 + \frac{\epsilon}{4}$ in $\mathcal{I}_3$ for agent $1$ and $1 + \frac{\epsilon}{4}$ in $\mathcal{I}_2$ for agent $2$. Notice that, in each of $\mathcal{I}_2$ and $\mathcal{I}_3$, the expected utility of the non-linear agent is maximized (subject to the allocation constraint) when randomizing between the two outcomes/allocations that give the non-linear agent utility  $1 + \frac{\epsilon}{2}$ (the linear agent is always indifferent). But, no matter how this randomization is done, one of the two expected utilities (either $u_1$ in $\mathcal{I}_3$ or $u_2$ in $\mathcal{I}_2$) is weakly less than $1 + \frac{\epsilon}{4}$.
\end{proof}

\section{Approximately Optimal EF Lotteries}\label{sec:upperbound}
In this section, we develop the main positive result of the paper.  We complement our lower bound from the previous section by giving an algorithm which computes an ex-ante $\epsilon$-Pareto optimal EF lottery for any number of agents and items using only a polynomial number of queries.  Assume, for now, that there is a single item. We start by discretizing the item: ask each agent $i$ a series of $1/\epsilon$ value queries\footnote{For simplicity, we assume that $1/\epsilon$ is integer throughout this section.} $\textsc{Value}(f_i, x)$ for each $x \in \{\epsilon, 2\epsilon, 3\epsilon, \dots, 1\}$.  We do this with the aim of treating $1/\epsilon$ pieces of the resource (of size $\epsilon$ each) as atomic.  We then seek to find a lottery over the possible allocations of the $m/\epsilon$ ``indivisible'' pieces.

\subsection{A failed first attempt}
We define a variable $p_{i,y}$ denoting the probability that agent $i$ is allocated exactly $y$ pieces of the resource for $y \in [1/\epsilon]$ and write the linear program below.

\begin{talign*}
&\text{maximize}  \sum_{i=1}^{n} \sum_{y = 0}^{1/\epsilon} p_{i,y}f_i(y \cdot \epsilon) \\
&\text{subject to}  \sum_{i = 1}^{n} \sum_{y = 0}^{1/\epsilon}p_{i,y}\cdot y \cdot \epsilon = 1\\
& \quad\quad\quad\quad \sum_{y =0}^{1/\epsilon}p_{i,y} = 1, \quad \forall i \in \agents\\
& \sum_{y = 0}^{1/\epsilon}p_{i,y} f_i(y\cdot \epsilon)  \geq \sum_{y = 0}^{1/\epsilon}p_{i',y}f_i(y\cdot \epsilon), \forall i, i' \in \agents
\end{talign*}
The first set of constraints requires that the whole item is allocated. The second set restricts the variables $p_{i,y}$ so that they define a probability distribution for each agent $i$. The third set of constraints encodes the ex-ante envy-freeness requirements. Maximizing the total expected utility of the agents in the resulting lottery yields an ex-ante Pareto efficient EF lottery.

To arrive at an ex-post outcome, one could then allocate to each player $i$ a segment of size $y\epsilon$ with probability $p_{i,y}$.  Unfortunately, as the constraints of this linear program only guarantee that the entire resource is allocated in expectation, derandomizing is not guaranteed to produce a lottery over ex-post feasible outcomes.  For example, consider the following two-agent instance: agent 1 has value $1$ if she receives half of the item, and agent 2 has value $1$ if she receives the entire item.  
Allocating to the first agent half the item with probability $1$ and to the second agent the entire item with probability $1/2$ is feasible for this linear program (LP). This solution to the LP, of course, cannot be derandomized.

\subsection{Corrected attempt: A flow-based approach}
In this section we construct an alternative linear program which randomizes over explicitly feasible ex-post outcomes, thus correcting the issues of the previous LP.  We start by describing our approach for the single item case and then describe how to extend the approach to multiple items.

We aim to find a lottery over $1/\epsilon$ pieces, which we treat as indivisible.  This time, however, we select an arbitrary ordering over the $n$ agents and construct a directed graph with $n$ columns of $1/\epsilon+1$ vertices each. We add a directed edge from vertex $u$ in the $i$-th column and $j$-th row to each vertex $v$ in the $i+1$-th column and the $j+y$-th row for all $y \in \{0, 1, \dots, 1/\epsilon - j\}$. We also have a source and sink vertex connected to all vertices in the first and the last columns, respectively (see Figure~\ref{fig:flowGraph}). This construction guarantees that a path from the source to the sink vertex corresponds to a feasible allocation of the indivisible pieces of the good to the agents. The residual amounts of the good given to each agent in our arbitrary order is determined by the edges in the path as follows. Suppose the path contains an edge connecting vertex $u$ in the $i$-th column and $j$-th row to vertex $v$ in the $i+1$-th column and $k$-th row; this corresponds to allocating a fraction $\epsilon \cdot (k - j)$ of the item to agent $i+1$. An edge from the source to vertex $u$ in the first column and $j$-th row corresponds to allocating a fraction of $\epsilon \cdot j$ of the item to agent $1$. 

We can then assign to each edge from $i,j$ to $i+1, j+y$ a variable $p_{i,j,y} \in [0,1]$ corresponding to the probability that a $j\cdot \epsilon$ fraction was allocated up to (and including) agent $i$ and an additional $y\cdot \epsilon$ fraction was given to agent $i+1$, giving her a value of $f_{i+1}(y \cdot \epsilon)$.  A Pareto efficient lottery can then be found by maximizing a linear objective (e.g., the total value of all agents) subject to flow preservation constraints through this graph.  To ensure ex-ante envy-freeness, the desired fairness constraint can be encoded in the linear program. Overall, we find an approximately ex-ante Pareto optimal EF lottery which is a randomization over ex-post feasible outcomes (since every path is ex-post feasible). 

Although we have only described how to construct the flow graph for a single item, this flow graph approach is easily extensible to multiple items when agents have additive valuations across items.  One begins by constructing a separate flow graph for each item.  Then, by connecting the sink vertex of the $m'$-th item graph to the source vertex of the $m'+1$-th item graph for all $m' < m$, we can then obtain an approximately ex-ante Pareto efficient lottery for any number of agents and items in polynomial time by optimizing the linear program over the set of constraints implied by this ``chained'' flow graph.   

\begin{figure}
\centering
\begin{tikzpicture}[->,scale=0.75,transform shape]
\tikzstyle{flowvertex}=[circle, draw, minimum size=24pt]
\tikzset{my label/.style args={#1}{
  append after command={
    (\tikzlastnode.center) node {#1}
    }
  }
}

  \node[flowvertex, my label={s}] (s) at (0,3) {};
  
  \node[flowvertex, my label={$1, 0$}](A0) at (2,0) {};
  \node[flowvertex, my label={$1, \epsilon$}](A1) at (2,2) {};
  \node[flowvertex, my label={$1, 2\epsilon$}](A2) at (2,4) {};
  \node[flowvertex, my label={$1, 3\epsilon$}](A3) at (2,6) {};
    
  \node[flowvertex, my label={$2, 0$}](B0) at (4,0) {};
  \node[flowvertex, my label={$2, \epsilon$}](B1) at (4,2) {};
  \node[flowvertex, my label={$2, 2\epsilon$}](B2) at (4,4) {};
  \node[flowvertex, my label={$2, 3\epsilon$}](B3) at (4,6) {};
	
  \node[flowvertex, my label={$3, 0$}](C0) at (6,0) {};
  \node[flowvertex, my label={$3, \epsilon$}](C1) at (6,2) {};
  \node[flowvertex, my label={$3, 2\epsilon$}](C2) at (6,4) {};
  \node[flowvertex, my label={$3, 3\epsilon$}](C3) at (6,6) {};

  \node[flowvertex, my label={t}] (t) at (8,3) {};
  
    \path (s) edge (A0);
    \path (s) edge (A1);
    \path (s) edge (A2);
    \path (s) edge (A3);
	
	\path (A0) edge (B0);
	\path (A0) edge (B1);
	\path (A0) edge (B2);
	\path (A0) edge (B3);
	
	\path (A1) edge (B1);
	\path (A1) edge (B2);
	\path (A1) edge (B3);
	
	\path (A2) edge (B2);
	\path (A2) edge (B3);
	
	\path (A3) edge (B3);
	
	\path (B0) edge (C0);
	\path (B0) edge (C1);
	\path (B0) edge (C2);
	\path (B0) edge (C3);
	
	\path (B1) edge (C1);
	\path (B1) edge (C2);
	\path (B1) edge (C3);
	
	\path (B2) edge (C2);
	\path (B2) edge (C3);
	
	\path (B3) edge (C3);
	
	\path (C0) edge (t);
	\path (C1) edge (t);
	\path (C2) edge (t);
	\path (C3) edge (t);

\end{tikzpicture}
\caption{Example flow graph for $\epsilon = 1/3$ and three agents.  A vertex $v$ labeled ``$i, y\epsilon$'' indicates that $y$ pieces of the resource have been consumed along the path from $s$ to $v$.}
\label{fig:flowGraph}
\end{figure}

We present the linear program for a single item below.  
Let $\mathcal{P} = \{2,3,\dots,n\}\times\{0,1,2,\dots,1/\epsilon\}$.
The objective is again to maximize the total value of the agents. The first set of constraints impose flow preservation and the second constraint requires that a unit of flow is pushed from the source. The third set of constraints encodes ex-ante envy-freeness as agent $i$ must weakly prefer her own lottery to that of  $i'$: 

\begin{talign*}
&\text{maximize}  \sum_{i=1}^{n} \sum_{j = 0}^{1/\epsilon} \sum_{y = j}^{1/\epsilon} p_{i,j,y}f_i((y-j) \cdot \epsilon) \\ 
&\text{subject to} \sum\limits_{k = 0}^{1/\epsilon}{p_{i-1, k, j - k}} = \sum\limits_{\ell = 0}^{1/\epsilon} p_{i,j,\ell},  \forall (i,j) \in \mathcal{P}\\ 
&\quad\quad\quad\quad \sum_{j = 0}^{1/\epsilon}{p_{0,0,j}} = 1\\ 
&\sum\limits_{j = 0}^{1/\epsilon}{\sum\limits_{y = j}^{1/\epsilon}{p_{i,j,y}f_i( (y-j)\epsilon )}} \geq \sum\limits_{j = 0}^{1/\epsilon}{\sum\limits_{y = j}^{1/\epsilon}{p_{i',j,y}f_i( (y-j) \epsilon )}},\\
&\quad\quad\quad\quad\quad\quad\quad\quad\quad\quad\quad\quad\quad\quad\quad\quad\quad\quad\forall i,i' \in \agents
\end{talign*}

\begin{theorem}\label{thm:mainthm}
The linear program above outputs an ex-ante $\epsilon$-Pareto optimal EF lottery in polynomial time using $O\left(\frac{mn}{\epsilon^2}\right)$ queries for any $\epsilon < \frac{1}{mn}$.
\end{theorem}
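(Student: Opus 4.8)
The goal is to verify, for the lottery $L$ output by the linear program, that (i) $L$ is ex-post feasible, (ii) $L \in \envyfree$, (iii) $L$ is ex-ante $\epsilon$-Pareto optimal with respect to $\envyfree$, and (iv) the query and running-time bounds hold.

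For (i) and (ii), the plan is a flow-decomposition argument. Every constraint of the program except the envy-freeness constraints states exactly that $(p_{i,j,y})$ is a unit $s$--$t$ flow in the chained acyclic graph of Figure~\ref{fig:flowGraph}; since this graph is a layered DAG, the flow is a convex combination of $s$--$t$ paths, and each such path, by the construction of the graph, encodes a deterministic allocation that gives each agent an integer multiple of the grid size of each item and over-allocates nothing, i.e.\ a feasible outcome. Hence $L$ is a lottery over feasible (``grid-aligned'') outcomes, so it is ex-post feasible. The objective, each agent's expected utility, and each $u_i(L_j)$ are linear functionals of $(p_{i,j,y})$ that agree, on every single path, with the corresponding quantity for the deterministic allocation encoded by that path; so by linearity their values at $L$ are the probability-weighted averages over the path decomposition, which shows both that the envy-freeness constraints certify $L \in \envyfree$ and that the objective value equals $\sum_i u_i(L)$. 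Conversely, every lottery over grid-aligned feasible outcomes yields a feasible flow, so the feasible region of the program equals the (nonempty, since e.g.\ the random-dictator lottery is grid-aligned and envy-free) set of grid-aligned lotteries in $\envyfree$; since the program maximizes total utility over this region, $L$ is in fact exactly ex-ante Pareto optimal among all grid-aligned lotteries in $\envyfree$.

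The heart of the proof --- and the step I expect to be the main obstacle --- is to boost this to ex-ante $\epsilon$-Pareto optimality with respect to the whole of $\envyfree$. For this it suffices to prove a rounding lemma: for every $L' \in \envyfree$ there is a grid-aligned $\hat L \in \envyfree$ with $u_i(\hat L) \ge u_i(L')/(1+\epsilon)$ for all $i \in \agents$. Granting the lemma, if some $L' \in \envyfree$ ex-ante $(1+\epsilon)$-dominated $L$, the corresponding $\hat L$ would satisfy $u_i(\hat L) \ge u_i(L)$ for all $i$ with a strict inequality for the dominating agent, hence $\sum_i u_i(\hat L) > \sum_i u_i(L)$, contradicting the LP's optimality. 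To prove the lemma I would first normalize: scaling an agent's valuations by a positive constant changes neither membership in $\envyfree$ nor the multiplicative Pareto relation, so I may assume the valuations are convenient (e.g.\ $1$-Lipschitz, or $\sum_k f_{ik}(1) = 1$). Then I would discretize each outcome in the (size-$\le n$) support of $L'$ cell by cell: on grid cells where an agent's value for the item grows only slightly, round the amount down, losing at most an $\epsilon$-fraction of the value at stake on that cell; on the $O(1/\epsilon)$ cells per agent--item where the value grows substantially, round up instead, which loses nothing but spends an extra grid-cell's worth of the item. A uniform utility floor for every agent (supplied by a random-dictator-style lottery, which is itself grid-aligned and lies in $\envyfree$) converts the resulting additive loss into a multiplicative one, and the over-allocation created by the upward roundings --- of order (number of agent--item pairs) times the grid size --- is small enough to be discarded harmlessly precisely when $\epsilon < 1/(mn)$, which is also what keeps the accumulation of the $mn$ separate rounding errors below an $\epsilon$-fraction of the floor. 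The delicate bookkeeping --- in particular, ensuring that rounding does not destroy envy-freeness (repairable by one further infinitesimal mixing step) and that the upward roundings and discards can be charged consistently across items --- is what I expect to be the technically involved part.

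Finally, (iv) is a count: a grid of size $\Theta(\epsilon^2)$ means $\Theta(1/\epsilon^2)$ value queries per agent--item pair, i.e.\ $O(mn/\epsilon^2)$ queries overall (with $O(mn)$ more to initialize the normalization); the chained flow graph has $O(mn/\epsilon^4)$ edges, hence that many LP variables, subject to $O(mn/\epsilon^2 + n^2)$ flow-conservation and envy-freeness constraints, so the program --- and hence the whole algorithm --- runs in time polynomial in $m$, $n$ and $1/\epsilon$.
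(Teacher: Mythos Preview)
Your treatment of ex-post feasibility via flow decomposition, envy-freeness from the constraints, and the query/time count matches the paper essentially exactly. The divergence is in step~(iii). The paper's argument is far shorter than your proposed rounding lemma: after observing that the random-dictator lottery is grid-aligned and LP-feasible (hence, with normalized valuations, $\sum_i u_i(L)\ge 1$, so some agent $i^\ast$ has $u_{i^\ast}(L)\ge 1/n$), the paper simply bounds the additive discretization error for any single agent by the Lipschitz constant times the grid width times $m$, namely $m\epsilon^2$, and checks the one inequality $m\epsilon^2 < \epsilon/n \le \epsilon\,u_{i^\ast}(L)$, which holds precisely because $\epsilon<1/(mn)$. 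That already rules out any lottery improving agent $i^\ast$ by a factor $1+\epsilon$, and by the definition of $\epsilon$-Pareto optimality this single agent is enough. In particular the paper never rounds a competing $L'$ cell by cell and therefore never has to repair envy-freeness after rounding, sidestepping exactly the ``delicate bookkeeping'' you flag as the hard part. Your round-up/round-down scheme with an EF-repair mixing step and over-allocation discards is a plausible (and arguably more rigorous) alternative --- it makes explicit a step about which the paper is rather informal, namely why the $m\epsilon^2$ slack suffices uniformly over all of $\envyfree$ --- but it is considerably more machinery than the paper actually deploys.
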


\begin{proof}
First note that ex-ante envy-freeness is implied directly from the constraints of the program and that the lottery implied by the variables is a randomization over ex-post feasible outcomes by the discussion above.  Let $C$ denote the maximum Lipschitz constant across all pairs of agents and resources.  To demonstrate that the lottery is ex-ante $\epsilon$-Pareto optimal using at most $\frac{Cmn}{\epsilon^2}$ queries, first observe that the lottery which gives each agent all the items with probability $1/n$ is a feasible solution to our program and is consistent with our discretization.  On the other hand, as the objective function is optimizing the sum of ex-ante utilities (subject to the constraints), the linear program outputs a lottery which is ex-ante Pareto optimal with respect to the discretization.  Therefore, the linear program must give at least one agent utility greater than or equal to $1/n$. The only reason that the solution for our program would not be ex-ante Pareto efficient with respect to the full valuation functions would be if there were jumps in the valuation functions between our queries.  However, since all agents have Lipschitz valuation functions with Lipschitz constant at most $C$, in any ex-post outcome, if we were to give every agent an additional piece of size $\frac{\epsilon^2}{C}$ of every item, the utility of an agent would increase by at most $m\epsilon^2$. 
For an agent with utility $v\geq 1/n$, we have $v \cdot (1 + \epsilon) \geq v + \frac{\epsilon}{n} > v + m\epsilon^2$. So, there is no lottery that improves her utility by a factor $1 + \epsilon$, even with respect to the exact valuation functions.  Finally, since there are a polynomial number of variables and constraints the program runs in polynomial time.
\end{proof}

We also note that our algorithm for finding ex-ante approximate Pareto efficient EF lotteries can be extended to agents with \emph{non-additive}, bounded gradient valuation functions for a constant number of items.  The procedure is largely the same, except we ask $\frac{1}{\epsilon^m}$ value queries for all possible combinations of $\epsilon$ sized pieces, e.g., for two items $(0,0), (0, \epsilon), (0, 2\epsilon), \dots , (0, 1), (\epsilon, 0), (\epsilon, \epsilon), \dots, (1, 1)$.  We then appropriately modify the flow graph to include vertices representing all possible tuples for each agent and direct edges accordingly.  However, this procedure only works for a constant number of items as the number of queries then depends \emph{exponentially} on the number of items.

Since we discretize the resources into $1/\epsilon$ pieces each and treat these pieces as indivisible items, one could consider using the approach of \citet{BCKM2013} to find a lottery over ex-post feasible outcomes.  However, our valuation functions over the items can be more general than their approach can support since we allow complementarity within a single resource (e.g., by allowing for convex valuation functions).  On the other hand, their approach allows for more expressive constraint structures over the feasible allocations of goods to agents.  We believe that investigating the degree to which the two approaches can be combined is an interesting question for future work.

\section{Additional Observations and Future Work}\label{sec:future}
In this section, we provide some auxiliary results in our model and suggest several interesting related open questions.  An interesting first question to consider would be to tighten the gap between the upper and lower bounds on the number of queries required to compute an ex-ante $\epsilon$-Pareto efficient EF lottery for a constant number of players.  Do $o\left(\frac{1}{\epsilon^2}\right)$ queries suffice?  In the case of general $m$ and $n$, do $o\left(\frac{mn}{\epsilon^2}\right)$ suffice? 

While our lower bound suggests that exact ex-ante Pareto efficient lotteries are unattainable using a bounded number of queries, if we simplify our objective from ex-ante Pareto efficiency to ex-post Pareto efficiency then we may find an ex-ante envy-free lottery which is  \emph{exactly} ex-post Pareto efficient with respect to the set of all outcomes.  
We note that our algorithm closely resembles the well-known random serial dictatorship mechanism for one-sided matching markets of \citet{RSD}.

\begin{theorem}\label{thm:rsd}
There exists an algorithm that produces an ex-ante envy-free lottery that is exactly ex-post Pareto efficient among the set of all outcomes using a polynomial number of queries.
\end{theorem}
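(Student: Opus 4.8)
The plan is to use a randomized serial dictatorship, the analogue in this model of the random serial dictatorship mechanism for one-sided matching. The deterministic core, given a fixed ordering of the agents, is: process agents one at a time, and when it is agent $i$'s turn let $R\in[0,1]^m$ record how much of each good is still unclaimed and give $i$ the coordinate-wise minimal bundle $b\le R$ attaining $\max_{b'\le R}u_i(b')$. Its $k$-th coordinate is exactly $\textsc{Cut}\bigl(f_{ik},\textsc{Value}(f_{ik},R_k)\bigr)$, and in fact the only numbers ever used are the per-(agent, good) saturation thresholds $\theta_{ik}=\textsc{Cut}\bigl(f_{ik},\textsc{Value}(f_{ik},1)\bigr)$, which do not depend on the ordering; so $O(nm)$ queries issued once up front determine the realized outcome of every ordering.

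\emph{Ex-post Pareto optimality with respect to all outcomes} would follow from the textbook serial-dictatorship induction: the first agent is at her global maximum utility, so no outcome gives her strictly more; with the first $t-1$ allocations held fixed, the $t$-th agent is at her maximum over what remains, so improving her forces lowering some earlier agent. Nothing valued is wastefully withheld, since whatever is left over at the end lies in the flat part of every agent's valuation and is worthless to all. Hence no outcome Pareto dominates a realized one, so any lottery supported on realized outcomes is ex-post Pareto optimal with respect to all outcomes.

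\emph{Ex-ante envy-freeness is the main obstacle,} and this is where the randomization does its work. The uniform mixture over the $n!$ orderings does not suffice in general --- e.g.\ on a single good with one agent whose value saturates at one half and one agent with linear value, the saturating agent is strictly envious in expectation --- so the output lottery should be a suitably chosen, possibly non-uniform, distribution $w$ over the serial-dictatorship outcomes $\{x^\sigma\}$. I would produce such a $w$ by an LP-feasibility argument: the constraints requiring, for each ordered pair $i\neq j$, that the $w$-average of $u_i(x^\sigma_i)-u_i(x^\sigma_j)$ be nonnegative, together with $w$ being a distribution, form a linear feasibility problem, and I would prove it always feasible by LP duality. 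A Farkas certificate of infeasibility would give nonnegative multipliers $\mu_{ij}$ on ordered pairs with $\sum_{i\neq j}\mu_{ij}\bigl(u_i(x^\sigma_j)-u_i(x^\sigma_i)\bigr)>0$ for \emph{every} ordering $\sigma$; I would contradict this by exhibiting, for the given $\mu$, an ordering (or a small mixture) that places agents carrying large envy-weight $\sum_j\mu_{ij}$ near the front --- where they get their globally optimal bundle and hence envy nobody --- while bounding the residual envy of the agents further back. Equivalently, this asserts that the value of the zero-sum ``envy game'' (one player choosing an ordering, the adversary an envied pair) is nonpositive; a coupling over transpositions handles the two-agent case cleanly and is the natural thing to try to push to general $n$.

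Keeping the construction polynomial takes a little care, since there are $n!$ orderings, but only polynomially many orderings are needed to realize a feasible $w$ (by the support-size reasoning already used in the paper, or Carathéodory on the envy polytope), so only polynomially many additional value queries and polynomial time are spent setting up and solving the LP. I expect the ex-ante envy-freeness step --- showing the envy LP is always feasible, i.e.\ that the envy game has nonpositive value --- to be the crux; the efficiency argument and the query accounting are routine.
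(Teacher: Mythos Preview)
The paper takes exactly the approach you start with and then reject: it draws a uniformly random ordering and runs serial dictatorship (each agent in turn takes, of every good, the minimal amount achieving her value for what remains), and its entire argument for ex-ante envy-freeness is the one-sentence symmetry claim that ``every agent has the same probability of appearing at any index'' and ``each consecutive agent receives a favorite bundle of goods among the remaining resources.''

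Your counterexample is valid and shows that this argument --- and the uniform lottery itself --- fails. With a single good, agent $A$ saturating at $\tfrac12$ and agent $B$ linear, the two orderings produce the allocations $(\tfrac12,\tfrac12)$ and $(0,1)$; under the uniform mixture $u_A(L_A)=\tfrac12 f_A(\tfrac12)+\tfrac12 f_A(0)=\tfrac12$ while $u_A(L_B)=\tfrac12 f_A(\tfrac12)+\tfrac12 f_A(1)=1$, so $A$ strictly envies $B$. The symmetry argument breaks because the ``remaining resources'' at a given position depend on \emph{which} agents, not merely how many, came before. So the paper's proof, as written, has a genuine gap, and you have correctly diagnosed it.

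Your proposed repair --- keep the serial-dictatorship outcomes but search by LP for a non-uniform mixing distribution --- is the natural next move, and in the two-agent counterexample it succeeds (put all weight on the $(A,B)$ ordering, which is already envy-free). But, as you yourself flag, the crux is whether the envy LP is \emph{always} feasible (equivalently, that your ``envy game'' has nonpositive value), and you offer only a Farkas/coupling sketch with no proof. There is no a~priori reason the convex hull of the finitely many serial-dictatorship utility profiles must contain an ex-ante envy-free point for general $n$ and $m$, so this is a real gap rather than a routine detail. In short: you have found the hole in the paper's argument, but your patch currently rests on a conjecture.
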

\begin{proof}
Consider an arbitrary set of $n$ agents $\agents$. We uniformly and randomly permute the agents and label them $1, 2, \dots, n$.  We ask agent $1$ a value query \textsc{Value}$(f_{1k}, 1)$ for each item $k \in \items$.  In doing so, we determine her value $v_{1k}$ for receiving each good $k$ entirely. We then ask her a cut query \textsc{Cut}$(f_{1k}, v_{1k})$ for each item $k \in \items$, thereby determining how much of each good she needs to receive to obtain her ``full value''. We continue by allocating to her as much of each item as the corresponding cut query returns.  For any remaining pieces of each item, we repeat this process for each successive agent in the ordering.  As valuation functions are additive across items, it is easy to verify that this is ex-post Pareto efficient; removing any amount of any of the goods from agent $i$ must decrease her value. This lottery is also ex-ante envy-free, since every agent has the same probability of appearing at any index in the random ordering, and at every point in the algorithm each consecutive agent receives a favorite bundle of goods among the remaining resources. Therefore, no agent envies the lottery of any other agent.
\end{proof}

On the other hand, while ex-post Pareto efficiency is easier to achieve than ex-ante Pareto efficiency, ex-post envy-freeness is a more stringent requirement than ex-ante envy-freeness.  For a lottery to be ex-post envy-free, it must be that all outcomes in its support are ex-post envy-free.  One may then ask the same questions we explore in this paper through the context of the class of ex-post envy-free lotteries.  For instance, can one obtain an ex-ante $\epsilon$-Pareto efficient lottery among the set of all ex-post envy-free lotteries using only a polynomial number of queries? 

One can also imagine asking a similar question of finding ex-ante Pareto optimal lotteries among the set of all \emph{ex-ante proportional} lotteries.  Ex-ante proportionality is another standard notion of fairness which requires all agents to receive expected utility at least $1/n$.  By changing the third set of constraints in our linear program in Section \ref{sec:upperbound} to encode proportionality, one directly obtains the same guarantee of Theorem \ref{thm:mainthm}, i.e., one obtains an approximately ex-ante Pareto optimal proportional lottery.  However, perhaps even more excitingly, one could change the welfare-maximization objective and instead maximize the \emph{leximin} utility by repeatedly maximizing the minimum value (as in, e.g., \citet{KPS2018}) and \emph{drop the fairness constraint altogether}.  Since the equiprobable lottery is in the space of feasible lotteries, the leximin solution will necessarily give all agents expected utility at least $1/n$.  Thus, the lottery would be approximately ex-ante Pareto optimal among \emph{all} lotteries and satisfy ex-ante proportionality!  

We view the question of determining if one can find a lottery which is approximately ex-ante Pareto optimal among all lotteries and which is also ex-ante envy-free to be a very interesting and challenging question.  In fact,  \citet{brams2012maxsum} point to the related question of finding a tractable algorithm to compute welfare-maximizing envy-free and Pareto efficient allocations (among the set of all allocations) in the standard cake-cutting model even in the seemingly quite simple case of known piecewise constant valuations as their ``most important, and presumably quite challenging, open problem''.  More broadly, investigating the extent to which different fairness notions change the shape of the ex-ante Pareto frontier is an interesting line of future study.

\section{Conclusion}
In this paper, we propose a new model 
for allocating homogeneous divisible goods and we show that it is simple in natural ways -- envy-freeness is easy to achieve -- yet highly complex in others -- many descriptive, non-trivial value functions can be captured.  
We provide an algorithm for finding approximately ex-ante Pareto optimal EF lotteries and complement this result with a lower bound on the query complexity of doing so.  Notably, while our algorithm uses only value queries, our lower bound applies to any algorithm using either cut and/or value queries.
We also provide an algorithm for finding exactly ex-post Pareto optimal EF lotteries which uses both cut and value queries.

Throughout this paper, we assume that valuation functions are Lipschitz (similarly to the assumptions made by \citet{RP1998} and \citet{CLPP2011} in their analogous works in the standard cake-cutting model). This assumption eliminates the possibility of really sharp jumps in the valuation functions (such jumps would significantly complicate our ability to identify envy free and Pareto efficient outcomes). 
Therefore, determining if approximate Pareto optimality can be achieved when the Lipschitz assumption is removed would be a  first step in extending our results.  
Perhaps most importantly, removing the assumption that agents are additive across resources is a major open line of further work.  We have a way of converting our approach for an arbitrary number of additive items to a computationally efficient method of finding  approximately ex-ante Pareto efficient EF lotteries for general valuations if there is a constant number of items.  On the other hand, extending this result to arbitrarily many items may require very different techniques.

\bibliographystyle{plainnat}
\bibliography{bibliography}

\end{document}